\newtheorem{rrule}{Reduction Rule}
\newtheorem{theorem}{Theorem}
\newtheorem{lemma}{Lemma}
\newtheorem{proposition}{Proposition}
\newcommand{\name}[1]{\textsc{#1}}
\newenvironment{namedefn}[3]{
\par\addvspace{0.4\baselineskip}\fbox{%
\begin{minipage}[t]{0.9\linewidth}%
\begin{tabular}{p{18mm}p{115mm}}
    \multicolumn{2}{l}{{\name{#1}}} \\
        \textsl{Input:} & {#2} \\ 
        \textsl{Question:} & {#3} \\
   \end{tabular}
\end{minipage}}\par\addvspace{0.4\baselineskip}
}
\newenvironment{parnamedefn}[4]{
\par\addvspace{0.4\baselineskip}\fbox{%
\begin{minipage}[t]{0.9\linewidth}%
\begin{tabular}{p{18mm}p{115mm}}
    \multicolumn{2}{l}{{\name{#1}}} \\
        \textsl{Input:} & {#2} \\ 
        \textsl{Parameter:} & {#3} \\ 
        \textsl{Question:} & {#4} \\
   \end{tabular}
\end{minipage}}\par\addvspace{0.4\baselineskip}
}
\title{Parameterized Complexity of $k$-Chinese Postman Problem}
\author{Gregory Gutin, Gabriele Muciaccia\\
\small  Royal Holloway, University of London\\[-3pt]
\small Egham, Surrey TW20 0EX, UK\\[-3pt]
\small \texttt{gutin@cs.rhul.ac.uk|G.Muciaccia@cs.rhul.ac.uk}
\and Anders Yeo\\
\small Singapore University of Technology and Design \\[-3pt]
\small 20 Dover Drive, Singapore 138682, and\\
\small Department of Mathematics, University of Johannesburg \\ 
\small Auckland Park, 2006 South Africa\\[-3pt]
\small \texttt{andersyeo@gmail.com}}
\begin{document}

\maketitle

\begin{abstract}
\noindent
We consider the following problem called the $k$-Chinese Postman Problem ($k$-CPP): given a connected edge-weighted graph $G$ 
and  integers $p$ and $k$, decide whether there are at least $k$ closed walks such that every edge of $G$ is contained in at least one of them
and the total weight of the edges in the walks is at most $p$? The problem $k$-CPP is NP-complete, and van Bevern et al. (to appear) and Sorge (2013) 
asked whether the $k$-CPP is fixed-parameter tractable when parameterized by $k$. We prove that the $k$-CPP is indeed fixed-parameter tractable. In fact, we prove a stronger result: the problem admits a kernel with $O(k^2\log k)$ vertices. We prove that the directed version of $k$-CPP is NP-complete and ask whether the directed version is fixed-parameter tractable when parameterized by $k$.
\end{abstract}

\section{Introduction}

Let $G=(V,E)$ be a connected graph, where each edge is assigned a nonnegative weight (a {\em weighted graph}).  Herein $n=|V|$ and $m=|E|$. A {\em closed walk} is a non-empty multiset $T=\{e_1,\dots,e_r\}$ of edges such that there exists a permutation $\sigma$ of $\{1,\ldots ,r\}$ satisfying the following: $e_{\sigma(i)}$ and $e_{\sigma(i+1)}$ share an end-vertex for every $1\le i\le r$ (where $\sigma(r+1)=\sigma(1)$). The {\sc Chinese Postman Problem} is one of the most studied and useful problems in combinatorial optimization. 


 \begin{namedefn}%
   {{\sc Chinese Postman Problem (CPP)}}%
   {A connected weighted graph $G=(V,E)$\newline
   and an integer $p$.}%
   {Is there a closed walk on $G$ such that every edge of $G$\newline
    is contained in it and the total weight of the edges\newline
    in the walk is at most $p$?}%
 \end{namedefn}
 
\noindent In this paper, we will study the following generalisation of CPP. 
 
\begin{parnamedefn}%
   {$k$-Chinese Postman Problem ($k$-CPP)}%
   {A connected weighted graph $G=(V,E)$ and\newline
   integers $p$ and $k$.}%
   {$k$}%
   {Is there a set of  $k$ closed walks such that every\newline
   edge of $G$ is contained in at least one of them\newline
   and the total weight of the edges in the walks\newline
   is at most $p$?}%
 \end{parnamedefn}
 If a vertex $v$ of $G$ is part of input and we require that each of the $k$ walks contains $v$ then this modification of $k$-CPP is polynomial-time solvable \cite{Zha1992,Per1994}.
 However, the original $k$-CPP is NP-complete; this result was proved by Thomassen \cite{Tho1997}. The following reduction from the {\sc $3$-Cycle Partitioning  Problem} is easier than Thomassen's reduction.
 In the {\sc $3$-Cycle Partitioning Problem}, given a graph $G$, we are to decide whether the edges of $G$ can be partitioned into $3$-cycles. The problem is known to be NP-complete \cite{Hol}. Let $k=m/3$ and let the weight of each edge of $G$ be 1. Observe that the solution of $(m/3)$-CPP is of weight $m$ if and only if  the edges of $G$ can be partitioned into $3$-cycles. This reduces the {\sc $3$-Cycle Partitioning Problem} into the $k$-CPP.
 
 Note that the above reduction works because undirected graphs do not contain 2-cycles and any traversal of an edge twice is forbidden as an optimal solution must traverse each edge only once. Directed graphs may contain directed 2-cycles and so the above reduction cannot be used for digraphs unless we restrict ourselves to oriented graphs, i.e., digraphs without directed 2-cycles. However, we could not find, in the literature, a proof that the {\sc Directed $3$-Cycle Partitioning Problem} is NP-complete for oriented graphs, and so we used a different proof, in Section  \ref{sec:3}, to show that the {\sc Directed $k$-CPP}, where $G$ is a directed graph, is NP-hard.

While a large number of parameterized\footnote{For background and terminology on parameterized complexity we refer the reader to the monographs~\cite{DowneyFellows99,FlumGrohe06,Niedermeier06}.} algorithmic and complexity results have been obtained for graph, hypergraph and constraint satisfaction problems, not much research has been carried out for combinatorial optimisation problem, apart form studying local search for such problems. Perhaps, the main reason is that the standard parameterizations developed for graphs, hypergraphs and constraint satisfaction (such as the value of an optimal solution or a structural parameter) are of little interest for many combinatorial optimisation problems. Recently, Niedermeier's group identified several practically useful parameters for the CPP and its generalizations, obtained a number of results and posed some open problems, see, e.g., \cite{DoMoNiWe2013,SoBeNiWe2011,SoBeNiWe2012}. 
This research was summarized in \cite{BeNiSoWe} and overviewed in \cite{Sor2013}. 

van Bevern {\em et al.} \cite{BeNiSoWe} (see Problem 3) and Sorge \cite{Sor2013} suggested to study the $k$-CPP as parameterized problem with parameter $k$ and asked whether this parameterized problem is fixed-parameter tractable, i.e., can be solved by an algorithm of running time $O(f(k)n^{O(1)})$, where $f$ is a function of $k$ only.
In Section \ref{sec:2} we prove that the $k$-CPP is indeed fixed-parameter tractable. In fact, we prove a stronger result: the problem admits a proper\footnote{The notion of a proper kernel was introduced in \cite{AbFe2006}.}
kernel with $O(k^2\log k)$ vertices. This means 
that, in polynomial time, we can either solve an instance $(G,k)$ of the $k$-CPP or obtain another instance $(G',k')$ of $k$-CPP such that $(G,k)$ is a Yes-instance if and only if $(G',k')$ is a Yes-instance, $G'$ has $O(k^2\log k)$ vertices and $k'\le k$. In fact, in our case, $k'=k$. 

In Section \ref{sec:3} we prove that the {\sc Directed $k$-CPP} is NP-hard. It is natural to ask whether the {\sc Directed $k$-CPP} parameterized by $k$ is fixed-parameter tractable. Our approach to prove that the $k$-CPP is fixed-parameter solvable does not seem to solve the {\sc Directed $k$-CPP} and the complexity of {\sc Directed $k$-CPP} remains an open problem.

\section{Kernel for $k$-CPP}\label{sec:2} 

In this section, $G=(V,E)$ is a connected weighted graph. For a solution $T=\{T_1,\dots,T_k\}$ to the $k$-CPP on $G$ ($k\ge 1$), let $G_T=(V,E_T)$, where $E_T$ is a multiset containing all edges of $E$, each as many times as it is traversed by $T_1\cup\dots\cup T_k$. Note that given $k$ closed walks which cover all the edges of a graph, their union is a closed walk covering all the edges and, therefore, it is a solution for the CPP. Hence, the following proposition holds:

\begin{proposition}\label{thm:CPPleqkCPP}
 The weight of an optimal solution for the $k$-CPP on $G$ is not smaller than the weight of an optimal solution for the CPP on $G.$
\end{proposition}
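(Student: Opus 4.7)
The plan is to prove the contrapositive direction directly on solutions: show that any feasible $k$-CPP solution can be transformed into a feasible CPP solution of the same total weight, which immediately implies the stated inequality on optimum values.

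Start from an optimal $k$-CPP solution $T=\{T_1,\dots,T_k\}$ and form the associated edge-multiset $E_T$ (equivalently, the multigraph $G_T$ already introduced in the paper). The two key structural observations I would record are: (i) every vertex of $G_T$ has even degree, because each closed walk $T_i$ enters and leaves every vertex the same number of times, so each $T_i$ contributes an even amount to the degree of every vertex; and (ii) $G_T$ is connected, because by the covering property $E_T$ contains every edge of $E$ at least once, and $G=(V,E)$ is connected by hypothesis.

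With (i) and (ii) in place, Euler's theorem applied to the multigraph $G_T$ yields a single closed Eulerian walk $T^*$ that traverses each edge of $G_T$ exactly once (with multiplicity). In particular $T^*$ is a closed walk in $G$ that covers every edge of $G$, i.e.\ a feasible CPP solution, and its total weight equals the sum over $e\in E$ of the multiplicity of $e$ in $E_T$ times $w(e)$, which is exactly the total weight of $T$. Therefore the optimal CPP weight is at most the optimal $k$-CPP weight, which is the claimed inequality.

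There is no real obstacle: the whole argument rests on Euler's theorem together with the elementary parity and connectedness observations above, and these are exactly the content of the informal remark preceding the proposition in the paper. The only small point worth making explicit in the write-up is the connectedness of $G_T$, which is sometimes glossed over but is needed to invoke the Eulerian conclusion.
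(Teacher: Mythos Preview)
Your argument is correct and is essentially the same as the paper's: the paper simply remarks that the union of the $k$ closed walks is itself a closed walk covering all edges, hence a feasible CPP solution of the same weight. Your write-up makes this rigorous by explicitly invoking Euler's theorem on $G_T$ and noting connectedness, which is a welcome clarification but not a different approach.
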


\begin{lemma}\label{lem:kdisjointcycles1}
 Let $T$ be an optimal solution for the CPP on $G$. 
If $G_T$ contains at least $k$ edge-disjoint cycles, then an optimal solution for the $k$-CPP on $G$ has the same weight as $T$. 
Furthermore if $k$ edge-disjoint cycles in $G_T$ are given, then an optimal solution for the $k$-CPP can be found in polynomial time.
\end{lemma}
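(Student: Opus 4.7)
The plan is to convert the edge multiset of $G_T$ (which, since $T$ is a closed walk covering all of $G$, is a connected Eulerian multigraph) into exactly $k$ edge-disjoint closed walks whose union is $G_T$; since those closed walks together use every edge of $G$ exactly as $T$ does, they constitute a feasible $k$-CPP solution of weight equal to $w(T)$. Combined with Proposition~\ref{thm:CPPleqkCPP}, this weight must be optimal, giving both parts of the lemma.

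For the construction, let $C_1,\dots,C_k$ be the given edge-disjoint cycles in $G_T$ and set $R=E(G_T)\setminus\bigcup_{i=1}^{k}E(C_i)$. Since $G_T$ is Eulerian and each $C_i$ contributes an even number to every vertex degree, the multigraph with edge set $R$ has all even degrees, so each of its connected components is itself Eulerian. The key observation is that every such component $R_j$ must share at least one vertex with some cycle $C_i$: indeed, $G_T$ is connected, so any vertex of $R_j$ is joined by a walk in $G_T$ to every $C_i$, and this walk must leave $V(R_j)$ at some stage via a non-$R$ edge, i.e.\ an edge of some cycle, necessarily incident to a vertex of $R_j$.

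Now greedily assign each component $R_j$ to an arbitrary cycle $C_{\pi(j)}$ it meets and define $H_i$ to be the edge multiset $E(C_i)\cup\bigcup_{j:\pi(j)=i}E(R_j)$. Each $H_i$ is even-degree, and it is connected because $C_i$ is connected and every attached $R_j$ meets $C_i$ in a common vertex. Consequently $H_i$ admits an Eulerian circuit $T_i$, which is a closed walk. The $H_i$ partition $E(G_T)$, so $\sum_{i=1}^{k} w(T_i)=w(T)$, and every edge of $G\subseteq E(G_T)$ lies in at least one $T_i$. Thus $\{T_1,\dots,T_k\}$ is a feasible $k$-CPP solution of weight $w(T)$; since Proposition~\ref{thm:CPPleqkCPP} forces the $k$-CPP optimum to be at least $w(T)$, the solution is optimal.

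All steps — computing $R$, finding its components, choosing the assignments $\pi$, and extracting Eulerian circuits — run in polynomial time in the size of $G_T$, which is polynomial in the input (given $T$ and the $k$ cycles). The only subtle point I expect to need care with is the incidence argument showing that every component of $R$ meets $\bigcup_i V(C_i)$; everything else is routine manipulation of Eulerian multigraphs.
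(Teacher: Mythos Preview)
Your proof is correct and follows essentially the same approach as the paper: remove the $k$ edge-disjoint cycles, observe the remainder is an even-degree multigraph, attach each of its components to a cycle it meets, and take Euler circuits, invoking Proposition~\ref{thm:CPPleqkCPP} for optimality. You are simply more explicit than the paper about why every component of $R$ must share a vertex with some $C_i$ and why each resulting $H_i$ is connected, both of which the paper leaves implicit.
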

\begin{proof}
Let $\cal C$ be any collection of $k$ edge-disjoint cycles in $G_T$.
 Delete all edges of $\cal C$ from $G_T$ and observe that every vertex in the remaining multigraph $G'$ is of even degree. 
Find an optimal CPP solution for every component of $G'$ and append each such solution $F$ to a cycle in $\cal C$ which has a common vertex with $F$. 
As a result, we obtain a collection $Q$ of $k$ closed walks for the $k$-CPP on $G$ of the same weight as $T$. So $Q$ is optimal by Proposition \ref{thm:CPPleqkCPP}.
\end{proof}

Let $V=V_1\cup V_2\cup V_{\ge 3}$, where $V_1$ is the set of vertices of degree $1$, $V_2$ is the set of vertices of degree $2$ and $V_{\ge 3}$ is the set of vertices of degree at least $3$. Below we will show that, in polynomial time, we can either solve $k$-CPP or bound $|V|$ from above, by bounding $|V_1|$, $|V_{\ge 3}|$ and $|V_2|$ separately.

Let $u$ be a vertex with exactly two neighbors $v$ and $w$. 
The operation of {\em bypassing} $u$ means deleting edges $uv$ and $uw$ and adding an edge $vw$ whose weight is the sum of the weights of $uv$ and $uw$. Note that the operation of bypassing may create parallel edges.

We {\bf will} need the following lemma. There, as in the rest of the section, unless stated otherwise, the logarithms are of base 2. The {\em order} of a graph is the number of its vertices.

\begin{lemma}\cite{BodThoYeo}\label{lem:degreethreeNN}
There exists a constant $c$ such that every graph $H$ with minimum degree at least $3$ and of order at least $c k\log k$ contain $k$ edge-disjoint cycles.
Such $k$ cycles can be found in polynomial time.
\end{lemma}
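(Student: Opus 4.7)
The plan is to prove the lemma by iteratively extracting a short cycle, removing its edges, and cleaning up the resulting graph so that the minimum degree condition is restored. The argument rests on three ingredients. First, I would establish the classical girth bound: any graph with minimum degree at least $3$ on $N$ vertices has a cycle of length at most $2\lceil\log_2 N\rceil+1$, found in polynomial time by a breadth-first search from an arbitrary vertex. The reason is that in the BFS tree each non-root vertex has at most one neighbour in the previous layer, hence contributes at least two children to the next layer, so the layers at least double in size; within $\log_2 N$ levels the cumulative count exceeds $N$, forcing a cross or back edge that closes a short cycle.

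Second, I would introduce a cleanup operator $R(\cdot)$ on multigraphs: repeatedly delete vertices of degree at most $1$ and suppress vertices of degree exactly $2$, replacing their two incident edges by a single new edge (with loops and parallel edges retained). Then $R(H)$ has minimum degree at least $3$, and the key structural fact is that edge-disjoint cycles in $R(H)$ lift to edge-disjoint cycles in $H$ by re-expanding each suppressed edge back to the original path it represents.

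Third, I would iterate: set $H_0:=R(H)$, and having chosen edge-disjoint short cycles $C_1,\dots,C_{i-1}$, find a short cycle $C_i$ in $H_{i-1}$ by the first ingredient and put $H_i:=R(H_{i-1}\setminus E(C_i))$. By the second ingredient, after $k$ iterations the cycles $C_1,\dots,C_k$ lift to $k$ edge-disjoint cycles in $H$, and each step runs in polynomial time.

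The main obstacle, and the place where the argument has to do real work, is lower-bounding $|V(H_i)|$ after each iteration — otherwise we cannot guarantee that a short cycle still exists. Removing the $\ell=O(\log|V(H_{i-1})|)$ edges of $C_i$ decreases the total degree by $2\ell$, so at most $2\ell$ vertices lose any degree at all; the delicate part is to bound the ensuing cleanup cascade. I would use a charging argument: each deletion or suppression in cleanup consumes at least one unit of the previously removed degree budget, so in total at most $O(\ell)=O(\log|V(H)|)$ vertices disappear per iteration. Summing, $|V(H_k)|\geq |V(H)|-O(k\log|V(H)|)$, which is at least $3$ (enough to house another cycle) whenever $|V(H)|\geq ck\log k$, since then $\log|V(H)|=O(\log k)$ and the constant $c$ can be chosen to absorb the hidden constants in the $O(\cdot)$.
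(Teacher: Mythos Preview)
Your greedy framework---repeatedly extract a shortest cycle, delete its edges, continue---matches the algorithm the paper attributes to \cite{BodThoYeo}. Where you diverge is in the bookkeeping. The argument from \cite{BodThoYeo}, which the paper reproduces in detail in the proof of Lemma~\ref{lem:edgebound}, never performs any cleanup: it invokes the Alon--Hoory--Linial bound that a (multi)graph with \emph{average} degree $d$ on $n$ vertices has a cycle of length at most $2\log_{d-1} n + 2$, so one only has to show that the average degree stays above a fixed constant greater than $2$ throughout the $k$ rounds. Since a minimum-degree-$3$ graph on $n$ vertices has at least $3n/2$ edges and each removed cycle costs only $O(\log n)$ edges, the average degree drops by $O((k\log n)/n)$ in total, which is negligible once $n \geq ck\log k$. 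Your route instead maintains the stronger minimum-degree-$3$ invariant via the cleanup operator $R(\cdot)$, and your charging argument (correctly) shows that cleanup removes at most $O(\ell)$ vertices per round. Both analyses are valid; the average-degree route is a bit shorter because it sidesteps the cleanup and lifting machinery entirely, while your route has the virtue of being self-contained (it does not rely on the Moore-bound strengthening of \cite{AloHooLin}).

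There is one genuine slip in your final step. You assert that $|V(H)|\geq ck\log k$ implies $\log|V(H)| = O(\log k)$, but $|V(H)|$ is only bounded from below, not above---it could be $2^{2^k}$. What you actually need is that $N - c'k\log N \geq 3$ for every $N \geq ck\log k$. This does hold, because the left-hand side is increasing in $N$ once $N > c'k/\ln 2$, so it suffices to verify it at the single point $N = ck\log k$, where indeed $\log N = \Theta(\log k)$ and your constant-absorption argument goes through. The repair is one line, but as written the justification is incorrect.
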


The fact that the $k$ cycles in Lemma \ref{lem:degreethreeNN} can be found in polynomial time is not mentioned in \cite{BodThoYeo}, but it is easy to deduce it from the greedy algorithm given in the proof of the lemma in \cite{BodThoYeo}. The greedy algorithm repeatedly chooses a shortest cycle ($k$ times) and deletes it from $H$. Since a shortest cycle can be found in polynomial time \cite{ItRo1978}, the algorithm is polynomial.

Note that the result of Lemma \ref{lem:degreethreeNN} also holds for multigraphs as if there are parallel edges then there is a cycle of length two.

In the proof of the next lemma, we will use the following well-known fact: there is an optimal solution $T$ of the CPP on $G$ which uses at most two copies of every edge of $G$ and such a solution can be found in polynomial time, see, e.g., \cite{CoCuPuSc1997}.

\begin{lemma}\label{lem:degreeonethree}
 If $|V_1|\geq k$ or $|V_{\geq 3}|\geq ck\log k+k$, where $c$ is given in Lemma \ref{lem:degreethreeNN}, then an optimal solution for the $k$-CPP on $G$ is of the same weight as an optimal solution for the CPP on $G$.
Moreover, an optimal solution for the $k$-CPP on $G$ can be obtained in polynomial time.
\end{lemma}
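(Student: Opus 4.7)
My plan is to apply Lemma \ref{lem:kdisjointcycles1}. I fix an optimal CPP solution $T$ that uses every edge at most twice (polynomial-time computable by the cited fact), and in each of the two cases of the hypothesis I will exhibit, in polynomial time, $k$ edge-disjoint cycles inside $G_T$. Lemma \ref{lem:kdisjointcycles1} will then yield both conclusions simultaneously.

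When $|V_1|\ge k$ the argument is immediate: $G_T$ is Eulerian, so the unique edge at each pendant $v\in V_1$ must be doubled in $T$, producing a $2$-cycle at $v$; these $2$-cycles at distinct pendants are edge-disjoint, so any $k$ of them suffice.

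When $|V_{\ge 3}|\ge ck\log k+k$, I may assume $|V_1|<k$ (otherwise the previous case applies), and I will find the $k$ cycles already in $G\subseteq G_T$. From $G$, I bypass every vertex of $V_2$, and then iteratively remove degree-$1$ vertices and bypass newly arising degree-$2$ vertices until the process stabilises, producing a multigraph $H$ of minimum degree at least $3$. Since bypassing is a topological operation, each edge of $H$ represents an internally disjoint path in $G$, so a simple cycle in $H$ lifts to a simple cycle in $G$, and edge-disjoint $H$-cycles lift to edge-disjoint $G$-cycles. Provided $|V(H)|>ck\log k$, Lemma \ref{lem:degreethreeNN} then produces the required $k$ cycles in $H$ in polynomial time.

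The crux is therefore the bound $|V(H)|\ge|V_{\ge 3}|-|V_1|$, which I prove by a short charging argument. Bypassing preserves every surviving vertex's degree, so only pendant removals decrease degrees, each by exactly one at a single vertex. The total number of pendant removals is therefore $|V_1|+r_1$, where $r_1$ counts the $V_{\ge 3}$-vertices that ultimately become pendants; and this must account for the total degree loss, which is at least $b+2r_1$, where $b$ is the number of $V_{\ge 3}$-vertices bypassed during the process (a bypassed vertex dropped from $\ge 3$ to $2$, a pendant-removed one from $\ge 3$ to $1$). Hence $b+r_1\le|V_1|$ and $|V(H)|=|V_{\ge 3}|-b-r_1\ge|V_{\ge 3}|-|V_1|\ge ck\log k+1$. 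The main subtlety I foresee is checking that this counting is insensitive to the order in which operations are performed; but because bypasses do not affect any other degrees, the multiset of degree reductions—and therefore the final multigraph $H$—is uniquely determined.
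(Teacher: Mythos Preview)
Your proof is correct and follows essentially the same approach as the paper: in both cases you exhibit $k$ edge-disjoint cycles in $G_T$ and invoke Lemma~\ref{lem:kdisjointcycles1}, using doubled pendant edges for the first case and Lemma~\ref{lem:degreethreeNN} for the second. The paper's treatment of the second case is terser---it simply asserts that after removing degree-$1$ vertices and bypassing degree-$2$ vertices one obtains a multigraph of minimum degree at least three with at least $ck\log k$ vertices---whereas your charging argument (showing $b+r_1\le |V_1|$ and hence $|V(H)|\ge |V_{\ge 3}|-|V_1|$) spells out why the iterated process cannot eliminate more than $|V_1|$ of the original $V_{\ge 3}$-vertices, making that step fully rigorous.
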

\begin{proof}
 Let $V_1=\{v_1,\dots,v_r\}$ with $r\geq k$ and let $w_i$ be the neighbor of $v_i$. Now, find in polynomial time an optimal solution $T$ for the CPP on $G$. In $G_T$ every vertex is of even degree, hence $G_T$ contains two copies of the edge $v_iw_i$ for every $1\le i\le r$, thus giving at least $k$ edge-disjoint $2$-cycles. We conclude by Lemma \ref{lem:kdisjointcycles1}.

 Now, assume $|V_1|\leq k$ and $|V_{\geq 3}|\geq ck\log k+k$. Remove all vertices of degree $1$ and bypass all vertices of degree $2$. The resulting multigraph contains at least $ck\log k$ vertices and has minimum degree at least three, hence by Lemma \ref{lem:degreethreeNN} it contains at least $k$ edge-disjoint cycles, which can be found in polynomial time. Therefore, in polynomial time we are able to find at least $k$ edge-disjoint cycles in $G$ and we conclude using Lemma \ref{lem:kdisjointcycles1} (note that for every optimal solution $T$ for the CPP on $G$, $G_T$ is a supergraph of $G$).
\end{proof}

By Lemma \ref{lem:degreeonethree}, we may assume that $|V_1\cup V_{\ge 3}| = O(k\log k)$, and so it remains to bound $|V_2|$. In order to do this we will use a reduction rule, but before giving it we will show the following lemma.

\begin{lemma}\label{lem:structure}
There exists an optimal solution, $T$, for the $k$-CPP on $G$, such that 
no edge in $G$, except possibly one edge of minimum weight, is used more than twice in $T$.
\end{lemma}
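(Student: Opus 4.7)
My plan is to argue by contradiction. Fix an optimal solution $T$ chosen to additionally minimize, among all optima, the number of distinct edges used at least three times. Suppose the conclusion fails, i.e., either (i) some edge $f$ used at least three times in $T$ satisfies $w(f)>w_{\min}$, or (ii) two distinct edges are both used at least three times. I will exhibit in each case a local modification contradicting the choice of $T$.

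The central technical step---and the main obstacle---is the following rearrangement: whenever an edge $f$ is used at least three times by $T$, one can rewrite $T$ as $k$ closed walks in which one walk is exactly the 2-cycle $\{f,f\}$. I would handle this in two sub-cases. If some walk $T_i$ already traverses $f$ at least twice, I split $T_i$ into $\{f,f\}$ and the even-degree remainder $T_i\setminus\{f,f\}$, which may disconnect into $c\ge 1$ Eulerian components. If instead no walk uses $f$ twice, then at least three walks each traverse $f$ once, and I first merge two of them through $f$ into a single walk using $f$ twice, reducing to the previous sub-case. The split may leave more than $k$ walks in total; to recover exactly $k$, I use that $t_f\ge 3$ implies at least one copy of $f$ survives in $G_T\setminus\{f,f\}$, so this multigraph (containing all of $G$) remains connected, and hence the walk-intersection graph on the non-$\{f,f\}$ walks is connected. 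I can therefore merge pairs of those walks through shared vertices until exactly $k-1$ of them remain, leaving $\{f,f\}$ as the $k$th walk.

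Given this rearrangement, the contradictions follow quickly. In case (i), I replace the walk $\{f,f\}$ by $\{e^*,e^*\}$ for a minimum-weight edge $e^*$; coverage is preserved (since $t_f-2\ge 1$ copies of $f$ remain elsewhere and $e^*$ is already an edge of $G$), while the total weight drops by $2(w(f)-w(e^*))>0$, contradicting the optimality of $T$. In case (ii), let $e\neq f$ be the second edge used at least three times, necessarily also of weight $w_{\min}$ (else case (i) already applies); swapping $\{f,f\}$ for $\{e,e\}$ keeps the weight unchanged but reduces $f$'s usage by two. Iterating the rearrangement-and-swap until $f$ is used at most twice affects only $e$'s count among edges used $\ge 3$ times, so the secondary objective strictly decreases, contradicting the minimality of that quantity for $T$.
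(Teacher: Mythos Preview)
Your argument is correct. Both you and the paper perform the same core swap---replace two copies of an overused edge by two copies of a minimum-weight edge---but the packaging differs. The paper fixes a minimum-weight edge $xy$ and picks an optimal $T$ that \emph{maximizes} the number of uses of $xy$; it then works entirely at the level of the multigraph $G_T$: removing two copies of $uv$ can destroy at most one cycle in a maximum edge-disjoint cycle packing (since two affected cycles $C_1,C_2$ yield a replacement cycle inside $(C_1\cup C_2)\setminus\{uv,uv\}$), and adding two copies of $xy$ restores one, so $G_{T'}$ still carries $k$ edge-disjoint cycles and hence a $k$-CPP solution (via Lemma~\ref{lem:kdisjointcycles1}). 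Your proof instead uses a different secondary objective (minimize the number of edges used $\ge 3$ times) and works at the level of the individual walks, explicitly rearranging them so that one walk is the $2$-cycle $\{f,f\}$ before swapping. Your route is more hands-on and case-heavy (the split/merge bookkeeping, the two sub-cases for whether some walk already traverses $f$ twice), but it avoids the cycle-packing counting argument entirely; the paper's route is shorter and leans on the equivalence between $k$-CPP solutions and eulerian multigraphs with $k$ edge-disjoint cycles already established in Lemma~\ref{lem:kdisjointcycles1}. Both tie-breaking choices yield the needed contradiction in the equal-weight case; the paper's choice makes that contradiction a one-liner.
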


\begin{proof}
  Let $xy$ be an edge of minimum weight in $G$. Let $T$ be an optimal solution for the $k$-CPP on $G$, such that $xy$ is used as many times as possible.
Assume for the sake of contradiction that $uv$ is used at least three times in $T$ and $uv$ is distinct from $xy$. Observe that $G_T$ is eulerian and contains $k$ edge-disjoint cycles.

Let $T'$ be obtained from $T$ be removing two copies of $uv$ and adding two copies of $xy$. 
If $C_1$ and $C_2$ are two edge-disjoint cycles using different copies of $uv$ in $G_T$, then we note that $uvu$ is a cycle in $G_T$ and that there exists a 
cycle containing only edges from $C_1$ and $C_2$, distinct from $uv$. This implies that 
deleting two copies of $uv$ from $T$ only decreases the maximum number of edge-disjoint cycles in $G_T$ by at most one. Adding two copies of $xy$ increases the maximum number of edge-disjoint cycles by at
least one. Therefore as $G_T$ contains $k$ edge-disjoint cycles, so does $G_{T'}$. This implies that $G_{T'}$ contains a solution to the $k$-CPP of weight at most that of $T$, which is the  desired contradiction.
\end{proof}

We are now ready to give our reduction rule.

\begin{rrule}\label{rrule:longpaths}
Let $P=v_0v_1\dots v_rv_{r+1}$ be a path in $G$ such that $v_i$ is a vertex of degree $2$ for $1\le i\le r$. If $r>k$, bypass a vertex $v_i$ such that $2\le i\le r-1$. 
Choose $v_i$ in such a way that bypassing it does not change the minimum weight of an edge in $G$.
\end{rrule}

This rule is safe because in $P$ a solution to the $k$-CPP either  duplicates all the edges or it duplicates none (with the exception of a minimum-weight edge, that may be duplicated more than once), and this is true in both $G$ and the graph $G'$ obtained after an application of the rule. Therefore duplicating a contracted edge corresponds to duplicating all the edges that where contracted. In addition, duplicating the path in $G'$ creates at least $k$ edge-disjoint cycles, as it was for $G$.

From now on, we will assume that $G$ is reduced under Reduction Rule \ref{rrule:longpaths}.

Define a multigraph $H$, such that $V(H) = V_1 \cup V_{\ge 3}$ and add $r$ edges between two vertices, $u$ and $v$, in $H$ if and only if there are exactly $r$
distinct paths from $u$ to $v$ in $G$ where all internal vertices have degree two in $G$ (an edge $uv \in E$ is such a path as it has no internal vertices).

Recall that under our assumptions, $|V(H)|=O(k\log k)$. Moreover, if there are vertices $u,v$ in $H$ such that there are at least $2k$ parallel edges between them, then $G$ contains at least $k$ edge-disjoint cycles (which can be found in polynomial time), and we may apply Lemma \ref{lem:kdisjointcycles1}. Therefore we may assume that this is not the case and this ensures that $|E(H)|=O(k^2\log k)$. Finally, since $G$ is reduced under Reduction Rule \ref{rrule:longpaths}, we have $|E(G)|=O(k^3\log k)$.

However, we can show a better bound on $|E(H)|$, which leads to an improved (polynomial) kernel.

\begin{lemma}\label{lem:edgebound}  Let $k\ge 2$ and let $c_1$ be any constant.
There exists a constant $c_2$ such that every multigraph $H$ with at most $c_1k\log k$ vertices and at least $c_2k\log k$ edges contain $k$ edge-disjoint cycles. Such cycles can be found in polynomial time.
\end{lemma}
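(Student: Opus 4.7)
The plan is to run the natural greedy algorithm that repeatedly extracts a shortest cycle from $H$ and deletes its edges, and to bound the length of each extracted cycle by a Moore-type argument. The intuition is that, since $H$ has many edges relative to its vertex count, every intermediate multigraph produced during the extraction still contains a short cycle, so the total edge budget suffices for $k$ extractions.

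The first step is the following short-cycle lemma: any loopless multigraph $H'$ with $n'$ vertices and more than $2n'$ edges contains a subgraph of minimum degree at least $3$. Such a subgraph is obtained by iteratively deleting vertices of degree at most $2$; the process cannot exhaust $H'$, because each deletion removes at most two edges while more than $2n'$ edges are present initially. Combined with the Moore bound on the resulting subgraph (with the convention that a pair of parallel edges forms a $2$-cycle in the multigraph setting), this yields a cycle of length at most $L := 2\log n' + c_3$ for some absolute constant $c_3$, and it can be located in polynomial time via the shortest-cycle algorithm of Itai and Rodeh~\cite{ItRo1978}.

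The algorithm is then: for $i=1,\ldots,k$, find a shortest cycle $C_i$ in the current multigraph and delete $E(C_i)$; the cycles $C_1,\ldots,C_k$ are edge-disjoint by construction. With $N:=|V(H)|\le c_1 k\log k$, the short-cycle lemma guarantees $|C_i|\le L:=2\log N+c_3=O(\log k)$ as long as the current edge count exceeds $2N$, which holds throughout the first $k$ iterations provided
\[
|E(H)| \ \ge\ 2N + (k-1)L + 1.
\]
Substituting the bounds on $N$ and $L$, the right-hand side is at most $(2c_1+O(1))\,k\log k$, so choosing $c_2$ larger than this implicit constant makes the greedy algorithm return $k$ edge-disjoint cycles in polynomial time. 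The main obstacle I expect is book-keeping the constants uniformly in $k\ge 2$: the $\log\log k$ term hidden in $L$ must be absorbed into $c_2$, and the Moore-type girth bound for multigraphs (in particular the degenerate regime $n'\in\{2,3\}$, where minimum degree $3$ forces parallel edges) requires a small amount of care to state with the correct additive slack $c_3$.
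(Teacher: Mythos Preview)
Your proposal is correct and follows essentially the same approach as the paper: run the greedy shortest-cycle deletion, and argue via a Moore-type girth bound that each extracted cycle has length $O(\log k)$ as long as the edge count stays above $2|V(H)|$, so that $c_2 k\log k$ edges suffice for $k$ extractions. The only cosmetic difference is that the paper invokes the Alon--Hoory--Linial average-degree form of the Moore bound directly (maintaining average degree at least $4$) and simplifies $\log(c_1 k\log k)\le \log(c_1 k^2)$ to avoid the $\log\log k$ term, whereas you first pass to a minimum-degree-$3$ subgraph and carry the $\log\log k$ explicitly; the arithmetic comes out the same.
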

\begin{proof}
Let $c_2 = 2c_1 +4 + (2\log c_1  + 2)$. 
This implies that the following holds (as $k \geq 2$ and therefore $\log k \geq 1$):
\begin{equation}\label{eq:2}
 c_2   \geq  2c_1 +4 + \frac{2\log c_1 + 2}{\log k}    =  \frac{2 c_1 k \log k + k(4 \log k + 2\log c_1 + 2) }{k \log k }.
\end{equation}

Alon et al.\ \cite{AloHooLin} showed that a graph with average degree $d$ and $n$ vertices has a cycle of length at most $2(\log_{d-1}n)+2$. 
Note that this result also holds for multigraphs as parallel edges form cycles of length two.

Consider the following greedy algorithm used in \cite{BodThoYeo}:
repeatedly choose a shortest cycle and delete its edges from $H$. 
We will show that by the assumptions of this theorem and the value of $c_2$, it is possible to 
run this algorithm until $k$ edge-disjoint cycles have been removed and at each step the average degree $\frac{2|E(H)|}{|V(H)|}$ is at least $4$. 
To prove this claim, let us run the algorithm until either it removed $k$ cycles or the average degree dropped below $4$ and suppose that
the algorithm stopped after removing $0\le r<k$ cycles.

Note that the number of edges removed from $H$ by the algorithm is at most the following:
\begin{equation}\label{eq:1}
r (2\log_3 n + 2) \leq r(2 \log(c_1 k^2) + 2) = r(4 \log k + 2\log c_1 + 2).
\end{equation}

Then by (\ref{eq:2}), (\ref{eq:1}) and $r<k$, we have that when the algorithm stops the graph still contains at least $2c_1k\log k$ edges, which implies that the average degree is still at least $4$. This is a contradiction, which completes the proof.
\end{proof}

Using Lemma \ref{lem:edgebound} we may assume that $|E(H)|\leq c_2k\log k$ for some constant $c_2$. 
Since $G$ is reduced under Reduction Rule \ref{rrule:longpaths}, we have $|V_2|=O(k^2\log k)$, which implies the following:

\begin{theorem}
The $k$-CPP admits a kernel with $O(k^2\log k)$ vertices.
\end{theorem}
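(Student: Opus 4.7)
The strategy is to bound each of $|V_1|$, $|V_{\ge 3}|$ and $|V_2|$ separately in polynomial time, using the machinery already assembled in this section. First I would invoke Lemma~\ref{lem:degreeonethree}: if $|V_1|\ge k$ or $|V_{\ge 3}|\ge ck\log k+k$, an optimal $k$-CPP solution is produced in polynomial time, and we output a trivial canonical Yes- or No-instance as the kernel, according to whether its weight is at most $p$. From now on we may assume $|V_1\cup V_{\ge 3}|=O(k\log k)$.

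Next I would apply Reduction Rule~\ref{rrule:longpaths} exhaustively, ensuring that every maximal path of degree-$2$ vertices joining two vertices of $V_1\cup V_{\ge 3}$ contains at most $k$ internal vertices. I then form the multigraph $H$ with $V(H)=V_1\cup V_{\ge 3}$, placing $r$ parallel $uv$-edges whenever there are exactly $r$ internally disjoint $uv$-paths in $G$ whose internal vertices all have degree $2$. If some pair $u,v$ is joined by at least $2k$ such parallel edges, pairing them up gives $k$ edge-disjoint cycles in $G$, and Lemma~\ref{lem:kdisjointcycles1} settles the instance directly. Otherwise $H$ is a multigraph on $O(k\log k)$ vertices, so Lemma~\ref{lem:edgebound}, applied with $c_1$ equal to the constant hidden in that vertex bound, yields a constant $c_2$ such that either $|E(H)|\le c_2k\log k$ or $H$ contains $k$ edge-disjoint cycles; in the latter case we again invoke Lemma~\ref{lem:kdisjointcycles1}.

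Putting this together, we may assume $|E(H)|=O(k\log k)$, and since each edge of $H$ accounts for at most $k$ internal degree-$2$ vertices of $G$ by the reduction rule, $|V_2|=O(k^2\log k)$. Summing $|V_1|$, $|V_{\ge 3}|$ and $|V_2|$ gives $|V|=O(k^2\log k)$, with $k'=k$, which is the claimed proper kernel. The only subtle point is verifying that $k$ edge-disjoint cycles in the multigraph $H$ lift to $k$ edge-disjoint cycles in $G$; but this is immediate, because distinct edges of $H$ correspond to internally disjoint paths in $G$, so edge-disjointness is preserved when each $H$-edge in a cycle is replaced by its underlying $G$-path. I expect this bookkeeping, together with making sure the kernelization outputs a valid instance whenever we short-circuit via Lemma~\ref{lem:kdisjointcycles1}, to be the only place that needs careful writing; all quantitative work has already been done in Lemmas~\ref{lem:degreeonethree} and~\ref{lem:edgebound}.
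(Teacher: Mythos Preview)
Your proposal is correct and follows essentially the same approach as the paper: bound $|V_1|$ and $|V_{\ge 3}|$ via Lemma~\ref{lem:degreeonethree}, apply Reduction Rule~\ref{rrule:longpaths}, pass to the multigraph $H$, and use Lemma~\ref{lem:edgebound} to bound $|E(H)|$ and hence $|V_2|$. Your added remarks about lifting edge-disjoint cycles from $H$ back to $G$ and about outputting a trivial instance when Lemma~\ref{lem:kdisjointcycles1} applies are points the paper leaves implicit, so including them is fine.
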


\section{NP-completeness of Directed $k$-CPP}\label{sec:3}

Our proof below is based on two facts:
\begin{itemize}
\item Deciding whether a digraph has at least $c$ arc-disjoint directed cycles, is NP-complete, see Theorem 13.3.2 in \cite{BaGu2009};
\item The weight of an optimal solution to the  $k$-CPP on a eulerian digraph $H$ equals the weight of $H$ if and only if $H$ has at least $k$ arc-disjoint directed cycles (it follows from the fact that each closed walk has a cycle).
\end{itemize}

In the following part, $d_D^+(u)$ and $d_D^-(u)$ denote the outdegree and the indegree, respectively, of a vertex $u$.

\begin{theorem}
{\sc Directed $k$-CPP} is NP-complete.
\end{theorem}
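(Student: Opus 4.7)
The plan is to reduce from the NP-complete problem of Fact~1---deciding whether a digraph has at least $c$ arc-disjoint directed cycles---to \textsc{Directed $k$-CPP}, and to extract the equivalence from Fact~2. Membership in NP is immediate: a candidate family of $k$ closed walks can be verified in polynomial time for total weight and arc coverage, so the real work lies in the NP-hardness reduction.

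Given an instance $(D,c)$ of arc-disjoint cycles, I will first reduce to the weakly-connected case by handling weak components independently, and treat separately the case in which $D$ is already eulerian, for which I can take $G=D$ with unit arc weights, $k=c$, and $p=|A(D)|$ and invoke Fact~2 directly. In the remaining case I construct $G$ from $D$ by adding a new vertex $z$ together with, for each $v\in V(D)$ with $\delta_D(v):=d_D^+(v)-d_D^-(v)>0$, $\delta_D(v)$ parallel arcs $z\to v$, and for each $v$ with $\delta_D(v)<0$, $-\delta_D(v)$ parallel arcs $v\to z$. Assign every arc weight $1$, let $N=\sum_{v:\delta_D(v)>0}\delta_D(v)$, and output the instance $(G,p,k)$ with $p=|A(G)|=|A(D)|+2N$ and $k=c+N$. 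A direct degree check shows that $G$ is eulerian, and weak connectedness of $G$ follows from that of $D$ together with the fact that $z$ is incident to at least one vertex of $D$.

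The correctness argument hinges on the identity $\mu(G)=\mu(D)+N$, where $\mu(\cdot)$ denotes the maximum number of arc-disjoint directed cycles. Once this identity is established, Fact~2 yields that the minimum $k$-CPP weight on $G$ equals $w(G)=p$ if and only if $\mu(G)\ge k$, equivalently $\mu(D)\ge c$. The upper bound $\mu(G)\le\mu(D)+N$ is easy: a directed cycle in $G$ visits $z$ at most once and therefore uses either $0$ or $2$ arcs of $z$, so any arc-disjoint family in $G$ splits into at most $N$ cycles through $z$ and at most $\mu(D)$ cycles lying inside $D$. For the lower bound, start with an arc-disjoint packing of $\mu(D)$ cycles in $D$; the complementary subdigraph $H$ of $G$ is still balanced at every vertex, because removing a directed cycle preserves in/out degree differences and the $z$-arcs were designed to cancel exactly the imbalances $\delta_D$, so $H$ is eulerian. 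Any decomposition of $H$ into arc-disjoint cycles must cover all $2N$ arcs of $z$ using cycles that each contain exactly two of them, producing at least $N$ cycles disjoint from the original packing.

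The main obstacle I anticipate is writing the lower bound cleanly, since the complementary subdigraph $H$ may contain further cycles outside of $z$; the key observation is simply that an eulerian digraph always admits a full arc-decomposition into cycles, and a counting argument on the $2N$ arcs incident to $z$ forces at least $N$ of those cycles to pass through $z$. The degenerate case $N=0$, in which $z$ would be isolated, is dispatched by the separate trivial reduction mentioned above.
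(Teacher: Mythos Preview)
Your proof is correct and follows essentially the same route as the paper: adjoin a single balancing vertex to make the digraph eulerian, establish the identity $\mu(G)=\mu(D)+N$ (with $N$ equal to the out-degree of the new vertex), and then invoke Fact~2. The only differences are cosmetic---the paper connects the new vertex to each unbalanced vertex via length-two paths through fresh internal vertices rather than via parallel arcs, and you treat the degenerate cases ($D$ disconnected or already eulerian) a bit more explicitly than the paper does.
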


\begin{proof}
  Let $D$ be a directed graph. Now define $D'$, as follows. Let $D'$ contain $D$ as an induced subgraph and add an extra vertex $x$. Now for every vertex $u \in V(D)$ with 
$d_D^+(u) > d_D^-(u)$ add $(d_D^+(u) - d_D^-(u))$ paths of length two from $x$ to $u$ (the central vertices on the paths are new vertices). For  every vertex $u \in V(D)$ with 
$d_D^-(u) > d_D^+(u)$ add $(d_D^-(u) - d_D^+(u))$ paths of length two from $u$ to $x$. Observe that $D'$ is eulerian. 

We will now show that $D$ contains $r$ arc-disjoint directed cycles if and only if $D'$ contains $r+d_{D'}^+(x)$ arc-disjoint directed cycles. Let ${\cal C}$ be a set of $r$
arc-disjoint directed cycles in $D$ and let $D^* = D' - A({\cal C})$ be the digraph obtained from $D'$ by deleting all arcs from cycles in ${\cal C}$. 
As $D^*$ is balanced (for every vertex the indegree is equal to the outdegree) and $d_{D^*}^+(x)=d_{D'}^+(x)$ we note that there exists at least $d_{D'}^+(x)$ arc-disjoint cycles in $D^*$
(just repeatedly remove any cycle containing $x$, which can be done $d_{D'}^+(x)$ times as $D^*$  will remain balanced). Therefore there are at least $r+d_{D'}^+(x)$ arc-disjoint cycles in $D'$.

Conversely, assume that $D'$ contains $r+d_{D'}^+(x)$ arc-disjoint cycles. At most $d_{D'}^+(x)$ of these contain $x$ and thus the remaining 
$r$ cycles must all lie in $D$.

So to decide whether $D$ has $r$ arc-disjoint cycles is equivalent to deciding whether $D'$ contains $r+d_{D'}^+(x)$ arc-disjoint cycles, which is the same as deciding whether $(r+d_{D'}^+(x))$-CPP on $D'$ has a 
solution without the need to duplicate any arc (i.e. the weight of the solution is the same as the sum of all arc-weights in $D'$).
\end{proof}

\medskip 

\noindent{\bf Acknowledgement} We'd like to thank the referee for several useful suggestions for improvement of the paper.

\end{document}